\newcommand{\nn}{\mathbb{N}}
\newcommand{\op}{\mbox{op}}
\newcommand{\cc}{\mathbb{C}}
\begin{document}

\mainmatter  %

\title{How Fast Can We Multiply Large Integers 
on an Actual Computer?}

\titlerunning{How Fast Can We Multiply Large Integers?}

\author{Martin F\"urer%
\thanks{Research supported in part by NSF Grant CCF-0964655 and CCF-1320814.}}
\authorrunning{Martin F\"urer}

\institute{Department of Computer Science and Engineering \\
	Pennsylvania State University \\
	University Park, PA 16802,  USA \\
	furer@cse.psu.edu \\
\url{http://cse.psu.edu/~furer}}

\maketitle

\begin{abstract}
We provide two complexity measures that can be used to measure the running time of algorithms to compute multiplications of long integers. The random access machine with unit or logarithmic cost is not adequate for measuring the complexity of a task like multiplication of long integers. The Turing machine is more useful here, but fails to take into account the multiplication instruction for short integers, which is available on physical computing devices. 

An interesting outcome is that the proposed refined complexity measures do not rank the well known multiplication algorithms the same way as the Turing machine model. 
\keywords{Integer multiplication, RAM models, FFT}
\end{abstract}

\section{Introduction}
The use of asymptotic time to measure the complexity of algorithms has been enormously successful in driving the search for new algorithmic ideas and more efficient algorithms. Nevertheless, it has some shortcomings.

One example is the complexity of the fastest known integer multiplication algorithm in the original version \cite{Furer2009} as well as the modular version \cite{DeKSS2008}. We call these algorithms F-R and DKSS-R respectively, as both operate over a ring.
Obviously, these algorithms running in time $n \log n \, 2^{O(\log^* \! n)}$ are asymptotically faster than the previously fastest algorithm \cite{SchonhageS1971} with a running time of $O(n \log n \log \log n)$.

Nevertheless, the Wikipedia entry ``Multiplication algorithm'' \cite{Wikipedia2013} says, ``However, these latter algorithms are only faster than Sch\"onhage-Strassen for impractically large inputs.'' And this judgment is not uncommon. It seems to be implied by the following argument. For any practical large length $n$ (from $2^{16} + 1$ to well beyond astronomical in length), $\log^* \! n$ is 5, resulting in $2^{\log^* \! n} = 32$, whereas for practical values of $n$, we have $\log \log n \leq 6$. (All logarithms are to the base 2 in this paper.)

In this special case, this reasoning is particularly faulty, because the exponent $\log^*n$ is used as an upper bound for the number of nested recursive calls, which should really be $\max\{0, \log^* \! n - 4\}$. The reason is that for small $n$, a different algorithm would be used in practice. Indeed, for any practical length $n$, the number of nested recursive calls and thus the exponent should be at most 1. 
The practical performance of the newer algorithm would have to be determined by an implementation, which might well be competitive, even though not by a large factor.

Thus we have these two very different methods of evaluating a multiplication algorithm, the theoretical asymptotic time bound and the practical implementation. The purpose of this paper is to build a bridge between the two, i.e., to propose models of computation, that are theoretically rigorous, yet can better predict the practical running time. 

We have to stress, that we are focussing here on unbounded integer operations. The models developed here are valid for similar tasks. This is in contrast to many areas, like graph algorithms, where the unit cost RAM provides a perfectly good complexity measure, because most natural algorithms involve only numbers of length $O(\log n)$ which can be implemented to fit into a computer word.

Another example shows our concern more clearly.
Sch\"onhage and Strassen~\cite{SchonhageS1971} have designed two fast integer multiplication algorithms. The first one (SS-C) is based on numerical approximations of complex roots of unity. It runs in time $O(n \log^2 n)$ for integers of length $n$ if school multiplication is used for all recursive calls. The second one (SS-F) is a discrete algorithm based on integers modulo Fermat numbers. It runs in time $O(n \log n \log \log n)$. Yet for a long time, the first algorithm has been routinely used for the extensive integer multiplications needed to find large prime numbers.

The first algorithm, SS-C, is asymptotically slower and somewhat unnatural for a discrete problem. It requires the tedious task of controlling rounding errors. It seems justified to ask whether the greater simplicity of SS-C is a sufficient reason to select it. After all, implementers don't stick with the simplicity, but add many clever ideas to speed up an implementation.

The running times cited above are bounds that hold simultaneously for the Turing machine time, as well as for the Boolean circuit size. These computation models and their time complexity are very natural and have facilitated many new algorithmic ideas in a vast number of areas including fast integer multiplication.

Sometimes, Turing machines are viewed as being impractical, because they function quite differently from physical computing machines. But when there is strong locality of memory access and no need for random access, as is the case for the Fast Fourier Transforms (FFTs) used in these fast integer multiplication algorithms, a Turing machine actually works just fine. Indeed, the first implementation \cite{SchonhageGV94} of a fast integer multiplication algorithm was on a Turing machine (a versatile multi-tape Turing machine with alphabet size equal to $2^w$, where $w$ is the word-length of the machine).

So in this particular application, the disadvantage of the Turing machine is not the lack of random access. What is missing in the Turing machine model are the built-in arithmetic operations of an actual computing device, in particular the multiplication instruction. Thus, we are aiming at a version of a Random Access Machine (RAM) \cite{ShepherdsonS63,CookR1973} with multiplication.

The unit cost RAM with multiplication is not a viable model, because it can quickly build up large numbers, and its operations are then far too powerful compared to a real world computing device. We want a version of a RAM that is both theoretically appealing and closely modeling the capabilities of actual computers.

The basic model proposed here is a $\log$-RAM. It can do arithmetic operations of length $O(\log n)$ in constant time. While this model might be too restricted for very short inputs $n$, it is very realistic for input lengths from thousands to trillions and beyond.

A more refined model aims to be even more realistic, adjusting for the varying costs of different operations, and accounting for the benefits of temporal and spatial locality.

\section{The Basic Model}

Our $\log$-RAM model is a random access machine \cite{ShepherdsonS63,CookR1973} augmented with arithmetic, Boolean, shift, input and output operations on nonnegative integers stored in binary. Furthermore, it has conditional and unconditional jump operations. It can do direct and indirect addressing. It has register $R_i$ for every $i \in \nn$, as well as an input and output register.

To be specific, we give now a precise definition of the instruction set of a $\log$-RAM. Nevertheless, the important part of a $\log$-RAM is not the instruction set, but the time complexity.

\begin{definition}
 A \emph{$\log$-RAM} has the the following instructions set
\begin{itemize}
\item 
$R_i = A \; \op \; A'$. \\
This is an assignment. Here, and in the following, the arguments $A$ and $A'$ can be of one of 3 possible forms: nonnegative integer constant, $R_j$, or $R_{R_j}$. 
The operation is one of $\{\lor, \land, \oplus, \lnot, +,-,*,\div\}$, where the Boolean operations (or, and, sum modulo 2, and negation) are vector operations, and $div$ is integer division ($A \div A' = \lfloor A/A' \rfloor$). 
The special case $R_i = R_j + 0$ copies registers.

Registers and arguments are non-negative integers stored in binary. For Boolean operations, they are interpreted as bit vectors.
\item
$R_i = R_j \; \mbox{cyclic shift}  \pm A \; \mbox{of length} \; A'$. \\
Here the  $A'$ right-most bits are cyclicly shifted by $A$ positions to the left ($+A$) or to the right ($-A$).

\item
Input to $R_i$ from $A$ to $A'$. \\
This instruction has the effect of reading into $R_i$ the $A'-A+1$ bits from position $A$ to position $A'$ of the input register.
\item
Output $R_j$ of length $A$. \\
This instruction has the effect of concatenating the rightmost $A$ bits of $R_j$ to the output on the right hand side.
\item
Jump if $A=A'$. \\
This instruction allows a jump conditioned on two registers being equal, but also a jump when 0 ($R_i = 0$) and an unconditional jump ($0=0$).
\end{itemize}
\end{definition}

\begin{definition}
We assume the machine knows the length $n$ of the input, e.g., it is written in register $R_0$ before a computation starts. \\
All registers, except the input and output register, are only allowed to be assigned bit strings of length $O(\log n)$ encoding nonnegative integers up to $n^{O(1)}$. Arguments denoting positions and lengths in the input, output, and shift instructions are only allowed to have values $O(\log n)$. \\
The time for any ($O(\log n)$ long) operation is $O(1)$ including for multiplication and division. We therefore might refer to the machine model as a \emph{unit cost $\log$-RAM}.
\end{definition}

As the log-RAM realistically mimics physical computing machines, it allows for a speed-up of $\Theta(\log n)$ for additions and shifts  of length $\Theta(\log n)$ and a speed-up of $\Theta(M(\log n))$ for multiplications of binary integers compared to Turing machine time, where $M(n)$ is the multiplication time of a Turing machine.

An alternative definition, equivalent for our purposes, would be to allow arbitrary long registers and arguments, but to charge $(\lceil \ell / \log n \rceil)^2$ for multiplication and division instructions, 1 for jump instructions, and $\lceil \ell / \log n \rceil$ for all other instructions. Here, $\ell = \max\{\ell',2^{\ell''}\}$, where $\ell'$ is the maximal length of any operand, and $\ell''$ the maximal length of a position or length operand.

Naturally, the equivalence only holds for somewhat efficient computations. The alternative definition of the $\log$-RAM would define a universal computing device, while by our definition, a $\log$-RAM can only define polynomial space functions.

\begin{proposition}
 A partial function is computable by a $\log$-RAM if and only if it is computable by a Turing machine in polynomial space.
\end{proposition}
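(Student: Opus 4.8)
The plan is to prove the two implications separately, the common thread being the invariant that, apart from the distinguished input and output registers, every register of a $\log$-RAM holds a value bounded by $n^{O(1)}$, that is, an $O(\log n)$-bit string. This invariant is part of the definition, so I may use it freely throughout.

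For the direction from $\log$-RAM to polynomial space, the first step is to bound the number of registers that can ever be touched. A register is addressed either through a constant index occurring literally in the (fixed, finite) program, of which there are only $O(1)$ many and all bounded by a constant, or through indirect addressing $R_{R_j}$, where the index $R_j$ is itself a register value and hence at most $n^{O(1)}$. Thus the set of live indices has size $n^{O(1)}$, and since each live register stores $O(\log n)$ bits, the whole register file occupies only $n^{O(1)}$ bits. I would then have the Turing machine keep this register file on a work tape as a list of (index, contents) pairs, together with the program counter, and simulate each $\log$-RAM instruction in turn: every arithmetic, Boolean, shift, or comparison operation acts on $O(\log n)$-bit operands and is computable by the Turing machine in polynomial time using only polynomial additional work space, while the input and output instructions are served directly from a read-only input tape and a write-only output tape, which are not charged to the work space. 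Hence the simulation runs in polynomial work space.

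For the converse, suppose a Turing machine runs in work space $n^c$. The log-RAM knows $n$ (in $R_0$) and can compute the constant $n^c$, which still fits in $O(\log n)$ bits, in $O(1)$ steps. The key idea is to cut the polynomially long tape into $n^{c}/\log n$ consecutive blocks of $\Theta(\log n)$ cells and store block number $k$ in a dedicated register indexed by $k$ (a fixed constant offset keeps these disjoint from the working registers); since $k \le n^{O(1)}$, indirect addressing reaches every block. The head position and the finite control state are each stored in a single register. To simulate one Turing machine step, the log-RAM computes from the head position the index and in-block offset of the current cell (a division and a remainder), fetches that block by indirect addressing, extracts and rewrites the affected cells using shifts and Boolean masks, and updates the head and state --- all in $O(1)$ log-RAM instructions. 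Running this loop reproduces the Turing machine computation.

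Both simulations are deterministic and proceed step for step, so each preserves halting as well as non-halting; this is exactly what makes the equivalence hold for \emph{partial} functions rather than only total ones. The step I expect to require the most care is the space accounting in the first direction: one must verify that maintaining, searching, and updating the (index, contents) representation never exceeds polynomial work space, and that the conventions on the input and output instructions make each transfer involve only $O(\log n)$ bits, so that the work space stays polynomial even though the input and output registers themselves may be long. Confirming that a halting $\log$-RAM run, although it may last up to $2^{n^{O(1)}}$ steps before a configuration would necessarily repeat, never escapes this polynomial space envelope is the crux that ties the structural register restriction to the polynomial-space bound.
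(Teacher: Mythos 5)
Your proof is correct and takes essentially the same approach as the paper: for one direction you bound the set of accessible registers by $n^{O(1)}$ indices of $O(\log n)$ bits each and simulate instruction by instruction, and for the converse you store the Turing machine's tape in registers with the head position and state in dedicated registers. The only difference is cosmetic: the paper simply stores one tape cell per register ($R_{i+2}$ holds cell $i$), so your packing of $\Theta(\log n)$ cells per block, with the attendant division, offset, and mask arithmetic, is more machinery than needed, though equally valid.
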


\begin{proof}
 Only if part: Note that the the $\log$-RAM can only access the first $n^{O(1)}$ registers, which each are allowed to hold bit strings of length $O(\log n)$. The $\log$-RAM has no way to address other registers. Each operation can easily be simulated by a Turing machine.
 
 If part: The $\log$-RAM can simulate a polynomially space bounded Turing machine by storing the contents of tape cell $i$ in the register $R_{i+2}$ and storing the head position in register $R_2$.
\qed
\end{proof}

\section{Differences to the Traditional RAM}
Naturally, the definition of the log-RAM is very similar to the traditional definition of  a RAM \cite{ShepherdsonS63,CookR1973} (see \cite [pp. 5 ff.]{AhoHU74}). Nevertheless, there are important differences.

The unit cost RAM provides an excellent cost measure for well behaved algorithms, e.g., for many graph algorithms. It gets useless if a multiplication operation is allowed, as one could quickly produce integers of huge lengths, resulting (for some simple instruction sets) in the power of unbounded parallel machines, which can handle PSPACE in polynomial time \cite{HartmanisS74}. 
Even without multiplication, length $T(n)$ integers can be produced in time $T(n)$, resulting in unrealistically cheap additions.

These drastic problems are avoided by the traditional RAM with logarithmic cost, which is a cost proportional to the length of a binary integer. Nevertheless, this type of RAM is still not suitable to provide a practical cost measure for a task like the multiplication of long integers. Obviously an instruction multiplying in one step with a cost of $O(n)$ would make the task trivial. Without a multiplication instruction, the logarithmic cost RAM still has the random access advantage over Turing machines (which we don't need here). But it does not have the practical advantage of real computers, that can do operations like additions of reasonable numbers almost as fast as a bit operation.

Our new log-RAM provides a complexity measure that is much closer to the computation time of a real computer. It allows a theoretical investigation that can better predict the practicality of an algorithm in a domain like large integer multiplication. 

One could even define a log-RAM with a more explicit cost function. If all registers have lengths bounded by $k \log n$, then the cost of a multiplication could be defined as $k^2$ and the cost of any other operation could be defined as $k$. This time measure would avoid any large hidden constant factors.

\section{Performance of the log-RAM on Multiplication Algorithms}

First let us review the most important multiplication algorithms. 

\subsection{The traditional multiplication algorithms}
The first multiplication algorithm with a non-trivial asymptotic running time is due to Karatsuba \cite{KaratsubaO1962}. It multiplies $a_1 2^{n/2} + a_0$ with 
$b_1 2^{n/2} + b_0$ recursively by computing the 3 products $a_1 b_1$, $a_0 b_0$, and $(a_1 + a_0)(b_1 + b_0)$, to obtain the product with a few additions and subtractions. The running time is $O(n^{\log 3})$. 

It is straightforward to see that working with numbers of length $O(\log n)$ allows us to use the full computational power of the $\log$-RAM. Thus, the time for school multiplication is $O(n^2 / \log^2 n)$, while the time for Karatsuba's algorithm is $O(n^{\log 3} / \log^2 n)$. Karatsuba's algorithm can be viewed as multiplying 2 linear polynomials by evaluating them at $0$, $1$, and $\infty$, followed by multiplying the values and interpolating. Toom's algorithm \cite{Toom1963} (analyzed and implemented on the Turing machine by Cook \cite{Cook1966}) instead uses higher degree polynomials. Degree 2 (with 5 coefficients in the product polynomial) is often used for moderately large numbers. Clearly, on a $\log$-RAM, we get the same factor $\Theta(\log^2 n)$ speed-up.

The first Sch\"onhage-Strassen integer multiplication algorithm, SS-C \cite{SchonhageS1971} partitions the factors into pieces of length $\Theta(\log n)$, to be used as coefficients of two polynomials over the complex numbers $\cc$. The polynomials are evaluated at all powers of a primitive root of unity by a Fast Fourier Transform (FFT). After the multiplications of corresponding values, interpolation is done by an inverse FFT. $O(\log n)$ accuracy of these numerical computations is sufficient to recover the precise result by rounding.
The running time is $O(n \log^2 n)$ on the Turing machine.

The second Sch\"onhage-Strassen integer multiplication algorithm, SS-F \cite{SchonhageS1971} partitions the factors into pieces of length $\Theta(\sqrt{n})$, to be used as coefficients of two polynomials over the ring of integers modulo $2^{\sqrt{n}}+1$, where $\sqrt{n}$ is rounded to a power of 2. The polynomials are evaluated at all powers of a principal root of unity in this ring by an FFT. The multiplications of values is done recursively. It is followed by interpolation with an inverse FFT. This faster method requires a depth $O(\log \log n)$ of nested recursions, and runs in time $O(n \log n \log \log n)$ on the Turing machine.

It is important to notice that the slower SS-C algorithm does small multiplications on every level of the FFT. Thus it can really profit from a built-in multiplication instruction for short integers. The faster SS-F algorithm does relatively simple shift operations at all levels of the FFT. This makes it fast for Turing machines. On the other hand, it cannot benefit from a built-in integer multiplication for small integers, except at the bottom of the recursion.
 
\begin{theorem}
 (a) The running time of the first Sch\"onhage-Strassen integer multiplication algorithm, SS-C, is $O(n)$ on the $\log$-RAM. \\
  (b) The running time of the second Sch\"onhage-Strassen integer multiplication algorithm, SS-F, is $O(n \log \log n)$ on the $\log$-RAM. \\
\end{theorem}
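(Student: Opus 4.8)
The plan is to re-examine the standard Turing-machine analyses of SS-C and SS-F and replace each primitive operation's Turing-machine cost by its $\log$-RAM cost, exploiting that every arithmetic, Boolean, shift, input/output, or division operation on $O(\log n)$-bit operands costs $O(1)$. In both algorithms the dominant work is an FFT (forward and inverse) of size $N$ over a coefficient ring, together with $N$ pointwise products; the two algorithms differ only in how they choose $N$ and the coefficient ring, and this is exactly what produces the two different bounds.

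For part (a), I would take $N = \Theta(n/\log n)$, the number of $\Theta(\log n)$-bit blocks into which each factor is split, with coefficients being complex numbers carried to $O(\log n)$ bits of precision. The forward and inverse FFTs consist of $\Theta(N\log N)$ butterflies, and since $\log N = \Theta(\log n)$ this is $\Theta(n)$ butterflies in total. Each butterfly performs a constant number of additions and one multiplication of $O(\log n)$-precision complex numbers; on the Turing machine the short multiplication costs $\Theta(M(\log n))$ (which is where the $O(n\log^2 n)$ bound originates), but on the $\log$-RAM it costs $O(1)$. Hence the FFTs cost $\Theta(n)$. The $\Theta(N)$ pointwise products, the packing of the input into blocks, and the final carry propagation to recover the integer product each cost $O(n/\log n)$, which is absorbed. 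Summing gives $O(n)$.

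For part (b), the coefficient ring is $\zz/(2^{\sqrt n}+1)\zz$, the number of blocks is $N = \Theta(\sqrt n)$, and each coefficient has $\Theta(\sqrt n)$ bits; crucially the roots of unity are powers of $2$, so multiplication by a twiddle factor is a (negacyclic) shift rather than a true multiplication. I would first bound the cost of the top level excluding the recursive products: the two FFTs have $\Theta(N\log N) = \Theta(\sqrt n\,\log n)$ butterflies, each an addition/subtraction and a shift on $\Theta(\sqrt n)$-bit operands, costing $\Theta(\sqrt n/\log n)$ on the $\log$-RAM (instead of $\Theta(\sqrt n)$ on the Turing machine); their product is $\Theta(n)$, with packing and reassembly again lower order. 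The $N$ pointwise products are products modulo $2^{\sqrt n}+1$ computed by the same algorithm recursively, giving the recurrence
\begin{equation*}
T(n) = \Theta(n) + \Theta(\sqrt n)\, T(\Theta(\sqrt n)).
\end{equation*}
The recursion has depth $\Theta(\log\log n)$, since iterating $m \mapsto \sqrt m$ reaches $O(1)$ after that many steps. A level-by-level accounting shows that the number of subproblems at depth $i$ times their individual non-recursive cost is again $\Theta(n)$ at every level, so the total is $\Theta(n\log\log n)$.

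The main obstacle is the careful bookkeeping needed to certify that every primitive step really fits the $\log$-RAM cost model: that the FFT indexing and the twiddle-factor access use only $O(\log n)$-bit addresses (which is fine, since $N = n^{O(1)}$), that an operation on an $\ell$-bit operand is correctly charged $\Theta(\ell/\log n)$ word operations, and that the negacyclic shifts in SS-F and the rounding of the numerical values in SS-C are implementable within these costs. The only genuinely nontrivial analytic step is solving the SS-F recurrence and confirming that each of the $\Theta(\log\log n)$ levels contributes exactly $\Theta(n)$; the comparison with the Turing-machine bounds ($n\log^2 n$ versus $n$ for SS-C, and $n\log n\log\log n$ versus $n\log\log n$ for SS-F) then explains why the $\log$-RAM reverses the asymptotic ranking of the two algorithms.
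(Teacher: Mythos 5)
Your part (a) is, in substance, the paper's own argument: the $O(n\log n)$ bit operations for shifts, copies and additions proceed on $\Omega(\log n)$-bit coefficients and so speed up by $\Theta(\log n)$, and the $O(n)$ short multiplications cost $O(1)$ each; that part is fine.

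For part (b), your high-level skeleton ($\Theta(n)$ per level, $\Theta(\log\log n)$ levels) matches the paper's conclusion, but your route through the recurrence has a genuine gap at the bottom of the recursion. You let the recursion run until the subproblem size ``reaches $O(1)$.'' The paper instead stops recursing as soon as coefficients have length $O(\log n)$ (subproblem size $\Theta(\log^2 n)$) and performs the remaining $O(n/\log\log n)$ base-case products with single machine instructions; this cutoff is essential, not cosmetic, and you never account for the base-case multiplications at all. Below the cutoff your per-butterfly charge of $\Theta(\sqrt m/\log n)$ word operations is invalid, because an instruction on sub-word operands still costs $\Omega(1)$. Moreover, each level of SS-F roughly doubles the total data (a subproblem modulo $2^m+1$ spawns about $\sqrt m$ subproblems of size about $2\sqrt m$), so after $\log\log n$ levels the total data is $\Theta(n\log n)$ bits; recursing to constant size would therefore execute $\Omega(n\log n)$ instructions on the bottom levels alone, destroying the claimed $O(n\log\log n)$ bound.

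Relatedly, your recurrence $T(n)=\Theta(n)+\Theta(\sqrt n)\,T(\Theta(\sqrt n))$ hides exactly the constants that matter: a constant blowup factor $c>1$ per level compounds to $c^{\log\log n}=\log^{\log c} n$ over the recursion, so ``each level contributes $\Theta(n)$'' cannot be read off the bare $\Theta$'s — with the honest constants and your crude $O(m)$ per-subproblem bound, level $i$ costs $\Theta(2^i n)$, summing to $\Theta(n\log n)$. To certify $\Theta(n)$ per level one needs the refined non-recursive cost $\Theta(m\log m/\log n)$ per size-$m$ subproblem, whose factor $\log m_i/\log n=\Theta(2^{-i})$ at depth $i$ exactly cancels the $\Theta(2^i)$ data blowup — and this formula is valid only while $\sqrt{m_i}=\Omega(\log n)$, which again forces the cutoff. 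The paper sidesteps the recurrence entirely: it quotes the known fact that the Turing-machine algorithm spends $O(n\log n)$ per recursion level (blowup already absorbed), divides by the word-level speedup $\Theta(\log n)$ available above the cutoff, and observes that the cutoff leaves the depth at $\log\log n-\log\log\log n=\Theta(\log\log n)$. You flag the per-level accounting as the nontrivial step, correctly, but as written your analysis does not supply it, and taken literally (recursion to size $O(1)$) it would fail.
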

 
\begin{proof}
(a) The analysis of the Turing machine algorithm accounts for $O(n \log n)$ simple bit operations to do shifts, copies and additions for the FFT and its inverse. Furthermore, it accounts for $O(n/\log n)$ multiplications of length $O(\log n)$ at each of the $O(\log n)$ levels of the FFT. As all coefficients have lengths $\Omega(\log n)$, all simple operations, including input and output, allow for a speed-up by a factor of $\Theta(\log n)$ on the $\log$-RAM compared to the Turing machine. Furthermore, all the $O(n)$ small integer multiplications are done in constant time each on the $\log$-RAM, resulting in an overall linear time.

(b) When coefficients reach a length of $O(\log n)$, recursion is no longer required. Then multiplications can be done directly on the $\log$-RAM. This reduces the recursion depth of the FFT from $\log \log n$ to $\log \log n - \log \log \log n$, representing no asymptotic speed-up. The Turing machine algorithm spends time $O(n \log n)$ at each recursion level. With the shortcut just described, all numbers involved have lengths $\Omega(\log n)$, resulting in a speed-up by a factor of $\Theta(\log n)$ on the $\log$-RAM for the additions and shifts. Therefore, the time for all additions and shifts is $O(n \log \log n)$.
There are $O(n 2^{\log \log n - \log \log \log n} / \log n) = O(n / \log \log n)$ short multiplications. Thus the total cost of $O(n \log \log n)$ is determined by the additions and shifts.
\qed
\end{proof}

\subsection{The newest multiplication algorithms}
 
The asymptotically fastest multiplication algorithm F-R \cite{Furer2009} does the FFT over a ring of polynomials $\mathcal{R} = \cc[x]/ (x^P + 1)$ with both, the value of $P$ and the length of the coefficients being of order $\log n$. Of the $O(\log n)$ levels of the FFT, only every $\log \log n$-th level requires expensive multiplications in the ring $\mathcal{R}$, while the multiplications at the other levels are done by a version of cyclic shifts.

Multiplication in the ring $\mathcal{R}$ itself is done by an FFT with $O(\log \log n)$ levels of cheap multiplications by cyclic shifts. For the asymptotic analysis of the Turing machine algorithm, the multiplication of values is done recursively. This introduces the factor of $2^{O(\log^* \! n)}$ for the $\log^* \! n$ depth of recursive calls with geometrically increasing cost from one recursion depth to the next. A practical implementation would not do such recursive calls, but instead use the built-in multiplication instruction of the actual machine. Similarly, the $\log$-RAM does each such multiplication with $1$ machine operation (or $O(1)$ operations with a small hidden factor).

\begin{theorem}
 The running time of the F-R integer multiplication algorithm is $O(n)$ on the $\log$-RAM.
\end{theorem}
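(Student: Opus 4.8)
The plan is to mirror the analysis used above for SS-C, splitting the Turing machine cost of the F-R algorithm into two parts — the \emph{simple} operations (additions, copies, and the cyclic shifts that stand in for twiddle factors, at every level of both the outer FFT over $\mathcal{R}$ and the inner FFTs that compute products in $\mathcal{R}$) and the genuine multiplications of values — and then to show that each part costs only $O(n)$ on the $\log$-RAM.

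First I would account for the simple operations. The outer FFT has $O(\log n)$ levels, each touching all $\Theta(n/\log^2 n)$ coefficients, which are elements of $\mathcal{R}$ of length $\Theta(\log^2 n)$; a butterfly costs $O(\log^2 n)$ bit operations, giving $O(n)$ per level and $O(n \log n)$ overall. The inner FFTs contribute the same order: there are $O(n/(\log n \log\log n))$ products in $\mathcal{R}$ to form across the $O(\log n/\log\log n)$ expensive levels, each realized by an inner FFT of $O(\log\log n)$ levels on $O(\log n)$ complex numbers of $\Theta(\log n)$-bit precision, again totalling $O(n \log n)$. Since every operand here has length $\Omega(\log n)$, the $\log$-RAM executes each such operation with the $\Theta(\log n)$ speed-up guaranteed by the model, bringing the simple-operation cost down to $O(n)$; input reading, output writing, and the encoding of the input into elements of $\mathcal{R}$ are likewise $O(n)$. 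Second I would handle the value multiplications, which is exactly where the Turing machine bound picks up its $2^{O(\log^* \! n)}$ factor. On the Turing machine these are computed recursively, the recursion bottoming out through $\log^* \! n$ nested calls, but every multiplication actually performed is a product of two $O(\log n)$-bit integers arising from a complex multiplication at the base of an inner FFT. Counting them, there are $O(\log n)$ per inner FFT and $O(n/(\log n \log\log n))$ inner FFTs, hence $O(n/\log\log n)$ short multiplications in all. On the $\log$-RAM each is done directly in $O(1)$ time, so no recursion is ever entered and the entire $2^{O(\log^* \! n)}$ factor disappears; their combined cost is $O(n/\log\log n) = O(n)$. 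Adding the two parts yields the claimed $O(n)$ bound.

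The main obstacle I expect is bookkeeping rather than any conceptual leap: one must check that every operand manipulated by the simple operations has length $\Omega(\log n)$, so that the uniform $\Theta(\log n)$ speed-up is legitimate at every level of both the outer and the inner transforms, and, dually, that every multiplication the Turing machine performs recursively is genuinely of length $O(\log n)$, so that it collapses to a single $\log$-RAM instruction. Verifying these two length invariants against the precise parameter choices of F-R (namely $P = \Theta(\log n)$, coefficient precision $\Theta(\log n)$, and outer transform length $\Theta(n/\log^2 n)$) is the step that requires care; once they are in place, the count of short multiplications and the $\Theta(\log n)$ speed-up combine exactly as in part (a) of the previous theorem.
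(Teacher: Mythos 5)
Your proposal is correct and follows essentially the same route as the paper's proof: you decompose the cost into the outer FFT's simple operations (with the $\Theta(\log n)$ speed-up giving $O(n)$) and the $O\bigl(n/(\log n \log\log n)\bigr)$ multiplications in $\mathcal{R}$, each handled by an inner FFT costing $O(\log n \log\log n)$ with its $O(n/\log\log n)$ total short multiplications collapsing to single $\log$-RAM instructions. Your explicit verification of the two length invariants and the remark that the $2^{O(\log^* n)}$ recursion factor vanishes merely make explicit what the paper states in the discussion preceding its theorem.
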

 
\begin{proof}
A multiplication in the ring $\mathcal{R}$ requires $O(\log \log n)$ levels of $O(\log n)$ easy operations of length $O(\log n)$ and direct multiplication of $O(\log n)$ values of length $O(\log n)$ each. The resulting cost for one multiplication in $\mathcal{R}$ is $O(\log n \log \log n)$ for the easy operations and $O(\log n)$ for the direct multiplications. Thus the total time is $O(\log n \log \log n)$.

The FFT and its inverse have $O(\log n)$ levels involving shifts and additions of integers of length $\Omega(\log n)$ and a total length of $O(n)$ on each level. This requires time $O(n)$ on the $\log$-RAM. 

Furthermore, there are also $O(\log n / \log \log n)$ levels, with each requiring $O(n/ \log^2 n)$ multiplications in $\mathcal{R}$ at a cost of $O(\log n \log \log n)$ each. Thus this part also requires time $O(n)$.
\qed
\end{proof}
 
The discrete variant DKSS-R \cite{DeKSS2008} of the F-R algorithm operating over a ring approximating $p$-adic numbers has the same time analysis for the $\log$-RAM as the original F-R algorithm.
 
\subsection{Comparisons of the $\log$-RAM algorithms}
 
In summary, the algorithms SS-C, SS-F, F-R, and DKSS-R still have very similar running times in the $\log$-RAM model. It is interesting to see that the ranking changes. The previously slowest (SS-C) and the most advanced (F-R and DKSS-R) now have the same linear running time. The elegant SS-F algorithm, that achieves multiplication mainly by shifts is now less competitive, because it does not make much use of the power of the built-in multiplication instruction available on the $\log$-RAM and in practical computers.

Still it is hard to judge these algorithms just based on the $\log$-RAM complexity. The algorithms SS-C and F-R have the significant drawback of using non-discrete numerical computations requiring sufficient precision and rounding.

\section{Related Tasks on the $\log$-RAM and the Storage Modification Machine}
\subsubsection{Division and elementary functions}
The results about multiplication have immediate corollaries about division and other tasks based on multiplication. Division and $n$-bit approximations of algebraic numbers can be computed with the help of these algorithms with the same asymptotic running time, while $n$-bit approximate evaluations of elementary functions or constants like $e$ and $\pi$ can be computed with only an additional factor of $\log n$ time \cite{Brent1976}.

\subsubsection{Storage modification machine}

It is worth noticing that there is another theoretically interesting modification of the random access machine (RAM), namely the storage modification machine. Here the modification goes in the opposite direction. Instead of allowing the more powerful multiplication instruction, even the weaker addition instruction is not allowed. Instead, there is just a successor and a predecessor instruction. As numbers never get too long, the unit cost measure makes sense here. 

Sch\"onhage \cite{Schonhage1980} has obtained a very powerful result about multiplication on a storage modification machine. It can be done in linear time. Naturally, this is a far more difficult result than our linear time results on the $\log$-RAM. On the other hand, his algorithm is very sophisticated with sorting and table look-up for the mass production of short products. This is seemingly an impractical algorithm. In the linear time $\log$-RAM algorithms, on the other hand, we can use natural practical procedures and take advantage of easily available hardware.

\subsubsection{Addition versus multiplication}
An added bonus of the $\log$-RAM model is also the distinction between the easy task of addition and the complicated task of multiplication. On the storage modification machine both tasks take linear time, just with a huge difference of the constant factors involved. The $\log$-RAM model makes the distinction clear. Using the built-in addition instruction, long additions of length $n$ take time $O(n/\log n)$, which is significantly better than the $O(n)$ time for multiplication.

\subsubsection{Open Problem}
For the Turing machine model, there is no super-linear lower bound known for  integer multiplication. For the $\log$-RAM model the input only provides a trivial $\Omega(n/\log n)$ lower bound. Naturally, we conjecture an $\Omega(n)$ lower bound for the $\log$-RAM. 

This conjectured $\Omega(n)$ lower bound for multiplication on the $\log$-RAM might be independent of the well known $\Omega(n \log n)$ lower bound conjectured for multiplication on the Turing machine. The $\log$-RAM can only benefit strongly from a Turing machine computation if it operates on $\Theta(\log n)$ long chunks of data, while the Turing machine can simulate a $\log$-RAM
efficiently only if the $\log$-RAM does not jump around much.

\section{The Refined Model log-RAM with Depth-Cost}
 
The proposed $\log$-RAM model realistically models the advantage provided by the availability of instructions (including multiplication) operating on words in real machines. Still, it does not account for the higher cost of a multiplication instruction over a simple Boolean vector operation on a word.
More importantly, the $\log$-RAM model does not model the speed-up provided by 
local access patterns on real machines.

For this purpose, we propose the \emph{$\log$-RAM with Depth-Cost} as a refined model. We basically keep the instruction set of the $\log$-RAM, but define a different cost measure. The cost is modified only slightly, as the $\log$-RAM already quite well approximates the cost on an actual machine.

\begin{definition}
A $\log$-RAM with Depth-Cost has all the instructions of the (unit-cost) $\log$-RAM plus an additional vector copying instruction. 
The vector copying instruction puts a copy of the vector of the memory cells from $R_A$ to $R_{A+A''-1}$ into the memory cells from $R_{A'}$ to $R_{A'+A''-1}$. The cost of this operation is $\max\{\log A, \log A', A''\}$.
For the other operations, there is a cost associated with accessing a register and cost associated with the operation itself. \\
(a) The cost of an operation is the order of the parallel time to do this operation efficiently on operands of length $O(\log n)$ by a bounded fan-in Boolean circuit. In particular, the cost of Boolean operations is $O(1)$, while the cost of arithmetic operations and shifts is $O(\log \log n)$. \\
(b) The cost of accessing register $R_i$ is $O(\max\{1, \log i\})$.
\end{definition}

The cost of accessing $R_i$ reflects the idea that registers with low index $i$ are in a faster cache and therefore less costly to access. 
The cost of the vector copying operation reflects locality of access. Access in a single memory cell is expensive. Accessing a sequence of adjacent memory cells with the vector copying instructions is cheaper.

Due to the locality of access for doing FFTs, the cost of memory access can be bounded by the cost of arithmetic operations, if one uses a cache hierarchy, with geometrically increasing chunks of data being brought in or moved out. A higher cache level just means shorter addresses, i.e., being closer to the top of the memory.

\section{Multiplication on the log-RAM with Depth-Cost}
Trivial tasks, like input, output and addition of length $n$ numbers now take $O(n/\log n)$ steps costing $O(\log \log n)$ each, resulting in time $O(n \log \log n / \log n)$ per task.

The simple multiplication algorithms (school, Karatsuba, Toom) gain a factor of $\Omega(\log n /\log \log n)$ compared to Turing machine cost, because they operate with chunks of length $\Theta(\log n)$ stored in a register costing $O(\log \log n)$ per operation.

Accessing far away registers is efficient for these and all the other multiplication algorithms studied here. They would just be brought to the front in vectors of $O(\log n)$ length at a cost of $O(1)$ per register.

\begin{theorem}
The first Sch\"onhage-Strassen SS-C algorithm has a running time of 
$O(n \log \log n)$ on the $\log$-RAM with Depth-Cost.
\end{theorem}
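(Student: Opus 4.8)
The plan is to bootstrap from part (a) of the earlier theorem, which establishes that SS-C runs in time $O(n)$ on the plain (unit-cost) $\log$-RAM. That analysis already decomposed the work into two parts: the simple word operations (shifts, copies, and additions arising from the butterflies of the FFT and its inverse, obtained from the $O(n \log n)$ Turing-machine bit operations after the $\Theta(\log n)$ speedup), amounting to $O(n)$ word operations, and the short multiplications of values (the $O(n/\log n)$ length-$\Theta(\log n)$ multiplications on each of the $O(\log n)$ levels), amounting to $O(n)$ further operations. My strategy is to re-tally exactly these same operations under the depth-cost charges and to control separately the cost of register access.

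First I would re-charge the arithmetic. In the depth-cost model a Boolean vector operation still costs $O(1)$, but an addition, a shift, or a short multiplication on a length-$\Theta(\log n)$ operand costs $O(\log \log n)$, the parallel circuit depth of these operations. Since each of the $O(n)$ simple operations and each of the $O(n)$ short multiplications is of this kind, the pure operation cost rises from $O(n)$ to $O(n \log \log n)$. No step of SS-C does anything outside this repertoire, so this already yields the claimed bound for the operations themselves.

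Second --- and this is the step I expect to be the crux --- I would bound the cost of memory access and data movement so that it does not exceed $O(n \log \log n)$. Here I would exploit the strong locality of the FFT and its inverse, the same property that lets a Turing machine implement them without a random-access penalty. Rather than touching scattered registers individually (at a cost of $O(\log i)$ each), the algorithm brings data to the front of memory in contiguous blocks of $\Theta(\log n)$ registers using the vector-copying instruction. A block copy of length $A'' = \Theta(\log n)$ between addresses bounded by $n^{O(1)}$ costs $\max\{\log A, \log A', A''\} = O(\log n)$, i.e.\ amortized $O(1)$ per register moved. Organizing the accesses into a cache hierarchy with geometrically increasing block sizes, as indicated in the remark following the model's definition, keeps the working addresses short and bounds the total access and movement cost by the total arithmetic cost. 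Thus memory handling is dominated by the $O(n \log \log n)$ arithmetic charge.

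Finally, summing the two contributions gives a total running time of $O(n \log \log n)$ on the $\log$-RAM with Depth-Cost, as claimed. The only genuinely delicate point is the second step: one must verify that the FFT's butterfly access pattern can indeed be realized with vector copies and a cache hierarchy whose aggregate cost stays within the arithmetic budget, rather than incurring a per-access $\log$ factor from the $O(\log i)$ register-access charge.
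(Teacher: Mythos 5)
Your proposal is correct and matches the paper's approach: the paper's own proof is the one-line observation that every operation now costs $O(\log \log n)$ instead of $O(1)$ in the unit-cost $\log$-RAM analysis, with the memory-access issue handled exactly as you do---via $O(\log n)$-length vector copies at amortized $O(1)$ per register and the cache-hierarchy remark---except that the paper places that argument in the remarks surrounding the theorem rather than inside the proof. Your more explicit treatment of the access-cost accounting is a faithful elaboration, not a different route.
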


\begin{proof}
All operations now cost $O(\log \log n)$ instead of $O(1)$ in the unit cost $\log$-RAM model. 
\qed
\end{proof}

\begin{theorem}
The second Sch\"onhage-Strassen SS-F algorithm has a running time of 
$O(n (\log \log n)^2)$ on the $\log$-RAM with Depth-Cost.
\end{theorem}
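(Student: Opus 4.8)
The plan is to reuse the operation-count bookkeeping already established for SS-F on the unit-cost $\log$-RAM (part (b) of the earlier theorem) and simply reweight each operation by its cost in the Depth-Cost model. The only difference between the two models is that operations are no longer free: a Boolean word operation still costs $O(1)$, but an arithmetic operation or a (cyclic) shift on a length-$O(\log n)$ register now costs $O(\log\log n)$, and register access and vector copying carry their own costs. I therefore expect the bound to be exactly the unit-cost bound $O(n\log\log n)$ multiplied by the per-operation factor $O(\log\log n)$, \emph{provided} the data-movement cost does not dominate.

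First I would recall from the unit-cost analysis that, after the recursion is cut off once coefficients shrink to length $O(\log n)$, the FFT and its inverse perform $O(n\log\log n)$ elementary addition and shift operations in total (namely $O(n)$ operations on each of the $O(\log\log n)$ levels), together with $O(n/\log\log n)$ short multiplications. Crucially, every one of these elementary operations acts on operands of length $O(\log n)$, which is precisely the regime in which the Depth-Cost of an arithmetic operation or a cyclic shift is $O(\log\log n)$.

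Second I would multiply through. The additions and shifts cost $O(\log\log n)$ each, so they contribute $O(n\log\log n)\cdot O(\log\log n)=O(n(\log\log n)^2)$. The short multiplications, being arithmetic operations on length-$O(\log n)$ operands, also cost $O(\log\log n)$ each, contributing $O(n/\log\log n)\cdot O(\log\log n)=O(n)$, which is dominated. Hence the purely arithmetic part of the computation already yields the claimed bound $O(n(\log\log n)^2)$.

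The hard part, and the step needing genuine care, is verifying that the memory-access cost stays within this bound rather than multiplying it by another factor. Here I would lean on the strong locality of the FFT access pattern: the coefficients touched at each butterfly can be streamed to low-index registers with the vector-copying instruction in blocks of length $\Theta(\log n)$, at amortized cost $O(1)$ per register and hence $O(\log\log n)$ per butterfly, matching the arithmetic cost. Organizing this as a cache hierarchy with geometrically increasing block sizes (as noted after the Depth-Cost definition) lets the register-addressing term $O(\max\{1,\log i\})$ be absorbed into the arithmetic cost instead of inflating it. Once this locality argument is pinned down, the total remains $O(n(\log\log n)^2)$, as asserted.
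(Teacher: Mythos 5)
Your proposal is correct and follows essentially the same route as the paper: the paper's (one-sentence) proof likewise takes the unit-cost $\log$-RAM bound of $O(n\log\log n)$ and observes that the cyclic shifts and additions each incur an extra $O(\log\log n)$ Depth-Cost factor, with data brought to the front in vectors of length $O(\log n)$ at amortized cost $O(1)$ per register. You merely make explicit the operation counts and the locality/cache-hierarchy argument that the paper states informally in its surrounding discussion.
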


\begin{proof}
A direct implementation (still with bringing vectors of length $O(\log n)$ to the front at once) costs an additional factor of $O(\log \log n)$ compared to the unit cost $\log$-RAM model because of the cyclic shifts and additions. 
\qed
\end{proof}

The cost of $O(\log \log n)$ for an addition could be reduced to $O(1)$ by storing numbers in a redundant form.
Every number would be represented as the sum of $2$ registers. An actual sum would require the replacement of $4$ summands by $2$, which can be accomplished by $O(1)$ Boolean and shift operations, as there are no long carries to handle. But these cost savings are useless, because the shift operations still cost $O(\log \log n)$ each.

\begin{theorem}
  The F-R algorithm has a running time of 
$O(n \log \log n)$ on the $\log$-RAM with Depth-Cost.
\end{theorem}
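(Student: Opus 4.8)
The plan is to reuse the unit-cost analysis of the F-R algorithm almost verbatim, tracking only how each elementary step reprices under the Depth-Cost measure. Recall the earlier $O(n)$ bound for F-R on the unit-cost $\log$-RAM; since every instruction costs $O(1)$ there, this is the same as saying the algorithm issues a total of $O(n)$ register instructions. In the Depth-Cost model the only operation costs that change are those of arithmetic operations and shifts, which now cost $O(\log \log n)$ each (Boolean operations still cost $O(1)$), together with the access charge $O(\max\{1,\log i\})$ for touching register $R_i$. The entire F-R computation is built from shifts, additions and short multiplications, so bounding every operation cost uniformly by $O(\log \log n)$ already gives an operation-cost total of $O(n)\cdot O(\log \log n) = O(n \log \log n)$. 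The remaining task is to verify that the memory-access charges do not dominate.

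I would then re-examine the two cost contributions identified in the unit-cost proof. The outer FFT and its inverse have $O(\log n)$ levels, each performing $O(n/\log n)$ shift and addition instructions on chunks of length $\Theta(\log n)$; at cost $O(\log \log n)$ apiece this is $O(n \log \log n / \log n)$ per level, hence $O(n \log \log n)$ overall. For the ring multiplications, one multiplication in $\mathcal{R}$ consists of $O(\log \log n)$ inner-FFT levels of $O(\log n)$ easy (shift and addition) operations together with $O(\log n)$ direct short multiplications; repricing each operation at $O(\log \log n)$ turns the unit-cost figure $O(\log n \log \log n)$ into $O(\log n (\log \log n)^2)$. Multiplying by the $O(\log n/\log \log n)$ expensive levels and the $O(n/\log^2 n)$ ring multiplications per level yields $O\!\left(\frac{\log n}{\log \log n}\cdot\frac{n}{\log^2 n}\cdot \log n (\log \log n)^2\right) = O(n \log \log n)$. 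Thus both contributions match the claimed bound, and in both cases the effect of the new cost measure is simply a uniform extra factor of $O(\log \log n)$, exactly as in the SS-C case.

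The delicate point, and the step I expect to be the main obstacle, is the memory-access charge. A naive implementation stores operands spread across registers with indices up to $\Theta(n/\log n)$, so a single access could cost $\Theta(\log n)$, inflating the total to $O(n \log n)$. To avoid this I would exploit the strong locality of the FFT access pattern together with the vector copying instruction: data is moved in and out in blocks of $\Theta(\log n)$ adjacent cells at a vector-copy cost of $\max\{\log A, \log A', \Theta(\log n)\} = O(\log n)$ per block, i.e.\ amortized $O(1)$ per register, after which the butterfly operations are carried out on low-index (front-of-memory) registers whose access cost is $O(1)$. Using a cache hierarchy with geometrically increasing block sizes, as noted earlier, the aggregate access cost is bounded by the aggregate arithmetic cost and hence contributes only $O(n \log \log n)$. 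Combining the two operation contributions with this dominated access cost gives the running time $O(n \log \log n)$.
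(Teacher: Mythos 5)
Your proof is correct, and its dominant-term computation is identical to the paper's: both charge each multiplication in $\mathcal{R}$ a Depth-Cost of $O(\log n \, (\log \log n)^2)$ (the inner-FFT easy operations repriced at $O(\log \log n)$ apiece) and multiply by the $O(\log n/\log \log n)$ expensive levels times $O(n/\log^2 n)$ ring multiplications per level to obtain $O(n \log \log n)$. Where you diverge is on the cheap levels: you simply reprice every shift and addition at $O(\log \log n)$, giving $O(n \log \log n)$ for that part, whereas the paper works harder to make the cheap-level operations cost $O(1)$ each --- it observes that the multiplications by cheap roots of unity only cyclically \emph{interchange} whole coefficients (the coefficients themselves are never bit-shifted), so they can be realized by $O(1)$ vector copy operations at cost $O(\log n)$ per $O(\log n)$-length vector, i.e.\ amortized $O(1)$ per word, and that since summands are never shifted, additions can be done in redundant representation at $O(1)$ each. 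The paper's refinement shows the cheap levels cost only $O(n)$, isolating the expensive levels as the sole bottleneck; your coarser uniform repricing is simpler and entirely sufficient for the stated $O(n \log \log n)$ bound, since the expensive levels dominate either way. Your handling of the access charges --- blocks of $\Theta(\log n)$ adjacent cells brought to the front by vector copy at amortized $O(1)$ per register, plus the cache-hierarchy observation --- is exactly how the paper disposes of this issue, though it does so in the discussion preceding the theorems rather than inside the proof itself; you were right to flag it explicitly, since without it the naive access charge of $\Theta(\log i)$ per register would indeed inflate the total to $O(n \log n)$.
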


\begin{proof}
 In the F-R algorithm, most levels of the FFT do cheap operations with cyclic shifts of coefficients within the ring $\mathcal{R}$. The coefficients themselves are not subjected to shifts, they are just cyclicly interchanged. These shifts can be done by $O(1)$ vector copy operations at a cost of $O(\log n)$ per operation involving a vector of length $O(\log n)$. The coefficients are subject to the addition operation. But this time, as there are no cyclic shifts of the summands, redundant additions are sufficient at a cost of $O(1)$ per operation.
 
 Only every $O(\log \log n)$-th level, expensive operations (arbitrary multiplications in $\mathcal{R}$) have to be done. In each of the $O(\log n / \log \log n)$ expensive levels, $O(n / \log^2 n)$ multiplications in $\mathcal{R}$ have to be done at a cost of $O(\log n (\log \log n)^2)$ each. This results in a total cost of $O(n \log \log n)$.
\qed
\end{proof}

\section{Conclusions}
We have investigated the cost of integer multiplication and noticed that neither the Turing machine nor the standard random access machine (RAM) models provide good complexity measures. The Turing machine, as well as the RAM without multiplication instruction cannot model the advantage of  physical machines with a multiplication instruction on words. 

If the unit cost RAM had a multiplication instruction, then it would not be a good computation model, because it has the power of parallel machines by creating huge numbers. The logarithmic cost RAM could have a multiplication instruction. But it would show too low a cost for long integers and too high a cost for short integers. It would not be a reasonable model for measuring the complexity of long integer multiplication.

We have proposed two RAM variants as better complexity measures for long multiplication and similar tasks. The measures reflect the fact that addition is faster than multiplication by known algorithms. They provide interesting practical complexity results for the various known algorithms. It may be surprising that the order of the standard multiplication algorithms by their time complexity in the new measures is different from the corresponding order in the Turing machine measure. The often used SS-C algorithm, based on numerical approximation in $\cc$, is actually faster in the new measures than the discrete SS-F algorithm, which is faster in the Turing machine model.

\end{document}